\documentclass[10pt, conference, letterpaper]{IEEEtran}
\usepackage{filecontents}

\usepackage{amsthm}
\newtheorem{theorem}{Theorem}

\newtheorem{definition}{Definition}

\newtheorem*{Game*}{Game}
\newtheorem*{TokGen*}{TokenGeneration Phase}
\newtheorem*{TokVer*}{TokenVerification Phase}
\newtheorem*{Setup*}{Setup}
\newtheorem*{Query*}{Query}
\newtheorem*{Challenge*}{Challenge}
\newtheorem*{Guess*}{Guess}
\usepackage{breqn}
\theoremstyle{remark}

\usepackage{algorithm}
\usepackage{algpseudocode}
\usepackage{comment}
\usepackage{hyperref}
\hypersetup{
    hidelinks = true
}

%% Please note that the amsthm package must not be loaded with
%% IEEEtran.cls because IEEEtran provides its own versions of
%% theorems. Also note that IEEEXplore does not accepts submissions
%% with hyperlinks, i.e., hyperref cannot be used.

% \DeclareMathOperator{\differential}{d}

\usepackage{tikz}
\usetikzlibrary{quantikz}

\usepackage[capitalize]{cleveref}
\usepackage{tensor}
\usepackage{algpseudocode}
\usepackage{mathtools,physics, bbm}
\usepackage{graphicx}% Include figure files
\usepackage{dcolumn}% Align table columns on decimal point
\usepackage{bm}% bold math
%\usepackage{hyperref}% add hypertext capabilities
%\usepackage[mathlines]{lineno}% Enable numbering of text and display math
%\linenumbers\relax % Commence numbering lines

%\usepackage[showframe,%Uncomment any one of the following lines to test 
%%scale=0.7, marginratio={1:1, 2:3}, ignoreall,% default settings
%%text={7in,10in},centering,
%%margin=1.5in,
%%total={6.5in,8.75in}, top=1.2in, left=0.9in, includefoot,
%%height=10in,a5paper,hmargin={3cm,0.8in},
%]{geometry}

% \IEEEoverridecommandlockouts\IEEEpubid{\makebox[\columnwidth]{978-1-6654-5975-4/22/\$31.00~\copyright~2022 European Union \hfill} \hspace{\columnsep}\makebox[\columnwidth]{ }}

\begin{document}
\bstctlcite{IEEEexample:BSTcontrol}

\title{Optimal depth and a novel approach to variational quantum process tomography}

\author{
\IEEEauthorblockN{Vladlen Galetsky$^{*}$, Pol Julià Farré\IEEEauthorrefmark{2}, Soham Ghosh$^{*}$, Christian Deppe\IEEEauthorrefmark{2} and Roberto Ferrara$^{*}$} 
\IEEEauthorblockA{$^{*}$Technical University of Munich\\
\IEEEauthorrefmark{2}Technical University of Braunschweig \\
Email: vladlen.galetsky@tum.de, pol.julia-farre@tu-braunschweig.de, soham.ghosh@tum.de,  \\ christian.deppe@tu-braunschweig.de, roberto.ferrara@tum.de}
}

\maketitle
\begin{abstract}
In this work, we present two new methods for Variational Quantum Circuit (VQC) Process Tomography onto $n$ qubits systems: PT\_VQC and U-VQSVD. 

Compared to the state of the art, PT\_VQC halves in each run the required amount of qubits for process tomography and decreases the required state initializations from \(4^{n}\) to just $2^{n}$, all while ensuring high-fidelity reconstruction of the targeted unitary channel \(U\). It is worth noting that, for a fixed reconstruction accuracy, PT\_VQC achieves faster convergence per iteration step compared to Quantum Deep Neural Network (QDNN) and tensor network schemes.

The novel U-VQSVD algorithm utilizes variational singular value decomposition to extract eigenvectors (up to a global phase) and their associated eigenvalues from an unknown unitary representing a general channel. We assess the performance of U-VQSVD by executing an attack on a non-unitary channel Quantum Physical Unclonable Function (QPUF). U-VQSVD outperforms an uninformed impersonation attack (using randomly generated input states) by a factor of 2 to 5, depending on the qubit dimension.

For the two presented methods, we propose a new approach to calculate the complexity of the displayed VQC, based on what we denote as optimal depth. 
\end{abstract}

\begin{IEEEkeywords}
VQC, Process tomography, Optimal depth, Quantum Computation, Singular Value Decomposition.
\end{IEEEkeywords}       
%display desired
\maketitle

%\tableofcontents
\section{Introduction}
\subsection{Process tomography}
Quantum tomography plays an essential role in the security assessment and characterization of channel and state evolution in Noisy Intermediate Scale Quantum (NISQ) devices \cite{Chen2023, RevModPhys.94.015004}. A class of promising candidates, ranging from Quantum Deep Neural Networks (QDNN) \cite{Beer2020}, tensor networks \cite{Torlai2023}, Variational Quantum Circuit (VQC) \cite{PhysRevA.105.032427,PhysRevResearch.5.L032040}
and quantum compilation algorithms \cite{Hai2023}, try to perform state and process tomography beyond the current classical computation capabilities.

In process tomography, certain approaches presume particular symmetries within the target circuit. For instance, tensor networks \cite{Torlai2023} rely on the notion of low entanglement structures. In unsupervised learning and QDNN, algorithms extract multidimensional probability distributions from raw data. The considerable expressivity of such models enables the retention of assumptions regarding high entanglement degrees in the unitary under study \cite{PhysRevResearch.3.L032049}.

As the system to be queried scales, state and unitary reconstruction become resource-intensive tasks, often resulting in an exponential overhead \cite{Moll_2018}. The poor scalability, the lengthy circuit depths and the extensive number of initializations and iterations required to minimize the objective function render VQC schemes unreliable for NISQ device applications. To mitigate these challenges, we propose in Section.\,\ref{Process tomography2} a new VQC process tomography scheme called PT\_VQC.

%When performing quantum tomography on a moderate-size systems, we also need to be cautious to avoid barren plateaus \cite{McClean2018}, data biases \cite{rahaman2019spectral}, tomographically incomplete and non-representative sets of input data \cite{Czerwinski2021}, gradient rate reduction with the increase of the qubit dimension, degree of entanglement \cite{Mooney_2021}, and noise \cite{dasgupta2021stability}.

\subsection{Process tomography of unitary evolutions}

Studying non-unitary channels presents a more intricate task. One proposed solution introduces convex optimization \cite{HUANG2020286} to address this challenge. However, this approach relies on understanding the constraints of the process matrix and assumes the structure of the channel under examination. Another proposed method utilizes classical shadows \cite{PhysRevResearch.6.013029}, which are based on shadow tomography techniques. However, the classical shadow post-processing becomes highly inefficient as the amount of qubits \( n \) scales up. Additionally, for Clifford shadows, prior knowledge of the subsystem is required to perform the non-unitary reconstruction.

We instead introduce a novel solution in Section\,\ref{Unitary singular value decomposition}: the U-VQSVD scheme. We showcase its effectiveness by executing an impersonation attack against a non-unitary Quantum Physical Unclonable Function (QPUF) channel, reconstructing the singular values and vectors of the unknown unitary evolution defining it.

\subsection{Quantum Physical Unclonable Function (QPUF)}
QPUFs are hardware-based unclonable devices with an inherit randomness \cite{Mina21,cry,Pirnay_2022} that can be produced, for example, by having an imperfection within an optical crystal. Based on a set of challenges, such randomness results into a production of a unique and unpredictable set of responses and this randomized mapping can be harnessed within the realm of secure authentication.

\begin{comment}
Recent such protocols  relying on a QPUF were based on a server issuing the quantum states to a user, which could  be then authenticated afterwards using the QPUF. The quantum no-cloning theorem ensures that the issued  states of the QPUF cannot be cloned and  performing quantum process tomography on a unitary is also prohibited by requiring an exponential number of queries.
\end{comment}
    
     Recent research \cite{Mina21} introduced a mathematical framework that, for the first time, provided a QPUF scheme with provable security against quantum adversaries, referred to as "Quantum selective unforgeability". However, according to \cite{Mina21} and \cite{GGDF22}, the security notion they define under the label "Quantum existential unforgeability" is still nonexistent in a unitary QPUF. Subsequently, \cite{QPUF_part} established an analogous notion of existential unforgeability for their designed non-unitary channel-based QPUF.
     
     We numerically show that U-VQSVD does not break the security notions of \cite{QPUF_part}, while still performing better than an attacker employing random states during authentication.

\section{Our Contributions}

We divide our set of contributions into two main parts. The former arises from a new process tomography algorithm called PT\_VQC, found in Section.\,\ref{Process tomography2}. In contrast, the approach proposed in \cite{PhysRevA.105.032427} relies on an inefficient SWAP-test algorithm to define their cost function, necessitating $4^{n}$ state initializations for a complete reconstruction of the target unitary. Such high requirements for VQC process tomography schemes render them slow and unreliable for NISQ device applications, particularly those reliant on qubit dimensionality.

We instead deliver an enhanced scheme by introducing an alternative quantum circuit architecture and redefining the cost function. Notably, we reduce the required state initializations from $4^{n}$ to just $2^{n}$ and halve for each run the required number of qubits. As a consequence, we improve the processing time of the algorithm while, in parallel, enhancing the accuracy of its gradient computation by applying the 4-term shift rule. We conducted a comparative analysis of the efficiency and cost of the PT\_VQC algorithm against variational Matrix Product Operators (MPO) based on tensor networks \cite{Torlai2023}, as well as against QDNN \cite{Beer2020}.

For the second main contribution, we introduce a new algorithm called U-VQSVD, found in Section.\,\ref{Unitary singular value decomposition}.
U-VQSVD is inspired by the variational quantum singular value decomposition scheme proposed in \cite{Wang_2021}.
 Our novel proposal allows us to learn unknown general channels by efficiently reconstructing their eigenvalues and eigenvectors up to a phase. Furthermore, we evaluate the effectiveness of our new process tomography scheme by constructing an impersonation attack on the authentication protocol of a Phase Estimation-based QPUF (PE-QPUF)  \cite{QPUF_part}. This assessment involves comparing the capabilities of an attacker employing random states to those of an attacker utilizing U-VQSVD to generate input states, alongside a trusted party.
 
 Finally, we discuss how to select the depth of the VQC introducing the concept of optimal depth and providing the reader with a guideline for its estimation.

%The newly proposed algorithm U-VQSVD is based on \cite{Wang_2021} and has the objective of performing the first-ever attack on a Quantum Phase Estimation-based Physical Unclonable Function (PE-QPUF) \cite{QPUF_part}, testing the security limits of such attack.

%The second focuses on state tomography Section.\,\ref{State tomography2} where we have linearly decreased the complexity of the compilation algorithm \cite{Hai2023} structure maintaining a high expressibility of the VQC. Such architecture was more precise for the same amount of depth and parameters used compared to the state-of-the-art alternatives.

\begin{figure}
\hspace*{0.85cm}
\includegraphics[width=0.35\textwidth,clip]{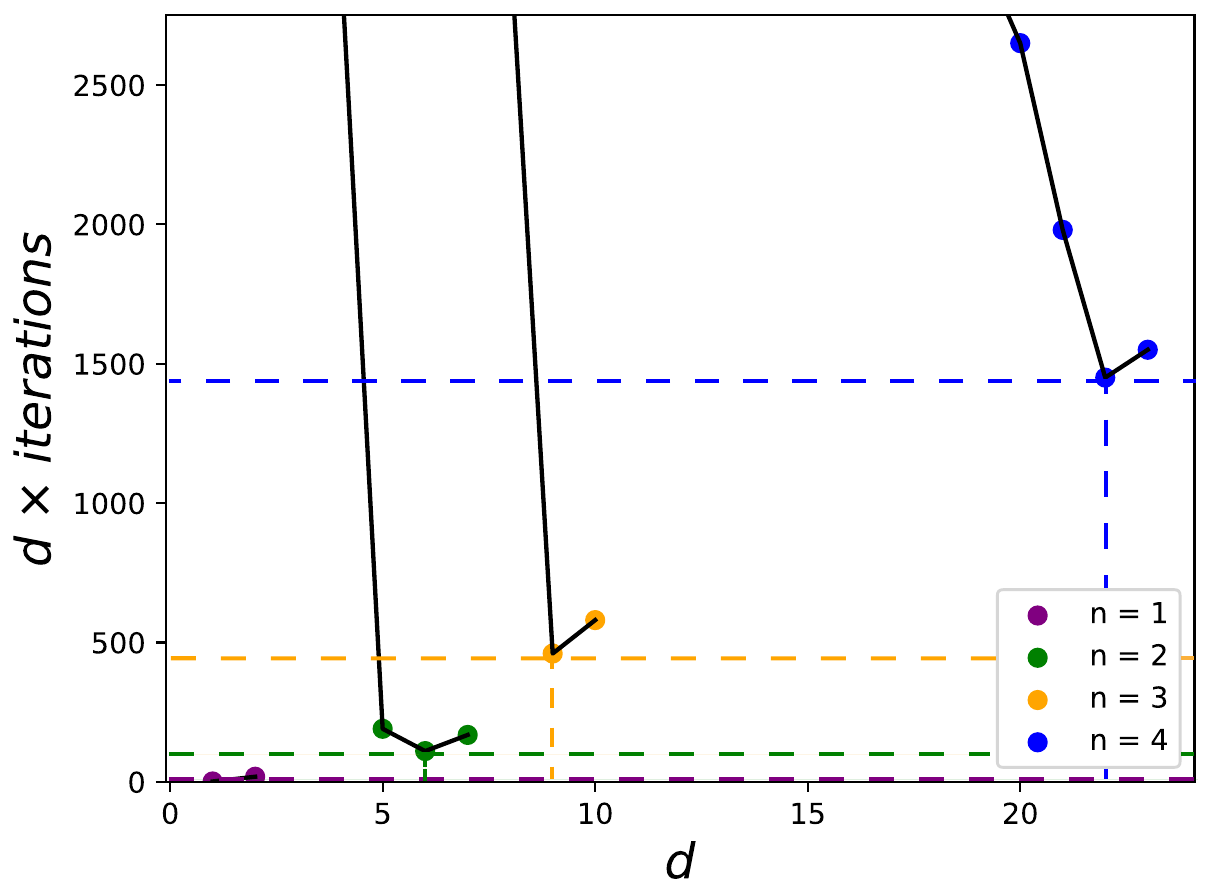}
\caption{Amount of required resources, as a function of the depth, characterized by the depth times the maximum number of iterations needed to learn 10 different sets of
eigenvalues and eigenvectors of a unitary
sampled from the Haar measure, with a cost
function taking, at most, the value 0.10.}

\label{Fig:Opt_depth}       
\end{figure} 

\section{Methods}
\label{Methods}
\subsection{Notation}
We denote vectors as, e.g., $\hat{\theta}$. $\hat{\theta_{j}}$ represents the $j_{th}$ block of $\hat{\theta}$, while $\theta_{j}$ denotes the $j_{th}$ parameter of $\hat{\theta}$. Quantum gates are defined with upper case letters, e.g., $RY(\theta)$. $U$ is the targeted  unitary and $U\textsubscript{VQC}$ is the unitary representing the variational quantum circuit.
A chain of quantum gates is denoted via a hyphen connecting two gates, e.g., $RY(\theta_0)$-$RX(\theta_1)$. For pure quantum states, we use the Dirac bra-ket notation $\ket{\psi}$ and lowercase Greek letters, e.g., \ $\rho$ and $\sigma$, for general density-matrix states. The variables \(t\) and \(a\) represent the number of target and ancilla qubits of the QPUF, respectively. We denote the chosen depth of the VQC as \(d\).

\subsection{Optimal depth}
\label{Op_methods}
The works of \cite{PhysRevA.101.052316} and \cite{Hai2023} relate the complexity of the VQC to the amount of qubits $n$ and depth $d$ via $\mathcal{O}(n^3d^2)$ and by creating a lower bound on the number of resources $R(n)\geq poly(n)$, respectively. However, such claims only provide a lower bound on the circuit complexity without ensuring a high performance. To offer a more practical understanding of the VQC complexity, we no longer treat \(d\) as being independent from \(n\), but rather expect to find some dependence \(d(n)\) for an efficient VQC performance. Recognizing this circuit behavior is essential for accurately assessing its implementability and scalability in near-term NISQ devices \cite{Bharti_2022}. To formally introduce the optimal depth, we begin by defining the concept of variational unitary space:

\begin{definition}
Variational Unitary Space: Space formed by all possible variational unitaries $U_{\mathrm{VQC}}$ attainable by a fixed architecture of the VQC $W(\hat{\theta_{j}})$ with $p$ real parameters

\begin{equation}
 \Gamma = \bigcup_{\hat{\theta}\in 
  {{\rm I\!R}}^{\times p}} \bigl\{ U_{\mathrm{VQC}}(\hat{\theta})\bigl\} \subseteq \mathbf{U(n)},
\end{equation}
\end{definition}

 where $\mathbf{U(n)}$ represents the unitary group of $n \times n$ matrices. 
  
  $U\textsubscript{VQC}$ can be described by block repetitions of length $d(n)$ which defines the depth of the VQC,
\begin{equation}
U\textsubscript{VQC}=\prod_{j=1}^{d(n)}W(\hat{\theta_{j}}),
\end{equation}
 where $W({\hat{\theta_{j}}})$ represents a combination of parametrized single-qubit and two-qubit gates. The definition of optimal depth is subsequently formulated:

\begin{definition}
Optimal depth ($D_{\mathrm{opt}}$): Assuming $\exists$  $d$ s.t. $U\subset\Gamma$, for a fixed architecture of the VQC $W({\hat{\theta_{j}}})$, $D_{opt}$ is the minimal amount of layers that suffices for a fiducial reconstruction of the unitary $U$. More concretely, with a fixed $W({\hat{\theta_{j}}})$,

\begin{equation}
D_{\mathrm{opt}}(n)=\mathrm{min}\{d(n)\} \hspace{0.2cm} \mathrm{s.t.}  \hspace{0.2cm} U_{\mathrm{VQC}}(\hat{\theta})\approx U.
\end{equation}
\end{definition}

%optimal topology \cite{Iten_2016} of the quantum circuit for which it can represent the unitary $U\rightarrow Min_{\hat{\theta}}   \{ \Lambda_{\hat{\theta}} \}$, with  $Min_{\hat{\theta}} \{  \Lambda_{\hat{\theta}} \}$ being the optimal topology containing the minimal amount of parameters $\hat{\theta}$.

\begin{figure}
\hspace{0.1cm}
\includegraphics[width=0.45\textwidth,clip]{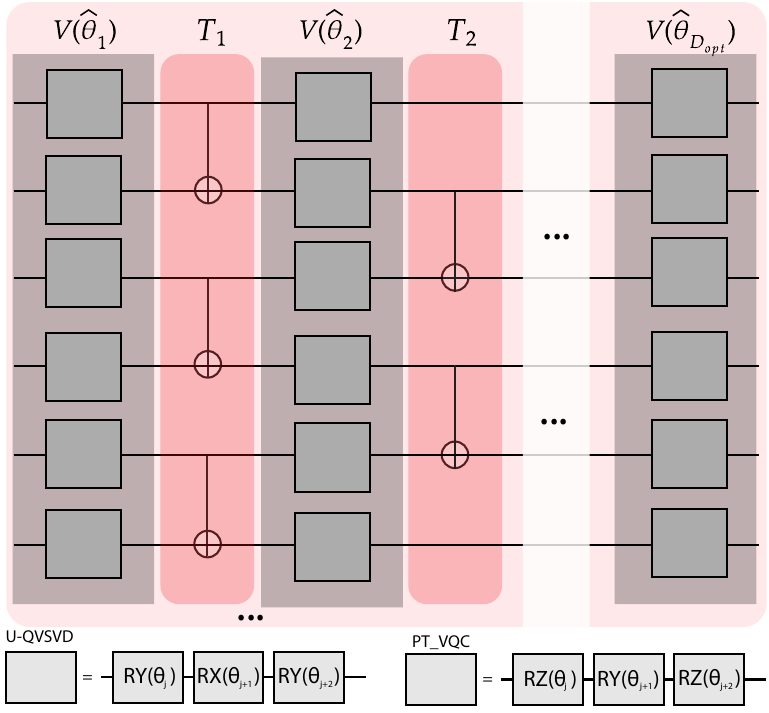}
\caption{Architecture of the VQC corresponding to $U\textsubscript{VQC}$. Different single qubit gate sequences were used for PT\_VQC and U-VQSVD algorithms.}
\label{ARR}       
\end{figure} 

For a given fixed value of 
$n$, each depth 
$d$ in the VQC corresponds to the number of iterations required to achieve the target accuracy of the cost function in the PT\_VQC algorithm. We search for the minimal mean number of iterations needed to attain this accuracy across 10 different target Haar random unitaries. The depth $d$ associated to it is considered as the optimal depth $D_{opt}$. In the U-VQSVD scheme, instead of considering the mean number of iterations, the maximum number of iterations times $d$ is recorded. This behavior is depicted in Fig. \ref{Fig:Opt_depth} within the context of U-VQSVD (Section \ref{Unitary singular value decomposition}) for the task of reconstructing the eigenvalues and eigenvectors of 10 Haar-random unitaries.

%The methodology to find such optimal depth is based on the gradient descent algorithm \cite{zhang2019gradient}, for which a random ansatz and a $d$-dimensional-walk with an adaptable step are performed for each circuit depth in a trained VQC, with $d$ being the number of updated parameters. Such behaviour can be viewed in Fig.\,\ref{Fig:Opt_depth} in the case of non-unitary process tomography as we see in Section\,\ref{Non-Unitary channel process tomography}. One must highlight the importance of finding such optimal depth to define the complexity of the VQC algorithm.

\subsection{PT\_VQC algorithm}
\label{Process tomography2}

A new method is proposed for VQC process tomography (PT\_VQC). We consider for our VQC the following parameterized architecture composed of single gate blocks $V(\hat{\theta_{j}})$ and two-qubit gate blocks $T_{j}$,

\begin{equation}
W(\hat{\theta_{j}})= V(\hat{\theta_{j}})T_{j}.
\end{equation}

\(U\textsubscript{VQC}(\hat{\theta})\) is constructed using \(d(n)\) blocks of single-qubit gate chains, comprising \(RZ\)-\(RY\)-\(RZ\) gates, and two-qubit gate blocks \(T_{j}=CX_{\mathrm{odd}_{j}|\mathrm{even}_{j}}\). Here, \(CX_{\mathrm{odd}_{j}|\mathrm{even}_j}\) represents a sequence of \(CX\) gates controlling either odd or even qubits, depending on the block number \(j\).
 
 The architecture for the VQC representing the $U\textsubscript{VQC}$ for the PT\_VQC and U-VQSVD algorithms can be seen in Fig.\,\ref{ARR} and the design of the algorithm is presented in Fig.\,\ref{architecture}. For a set of initializations completing an orthonormal basis $\{\ket{i}\}_{i=0}^{2^{n}-1}$, the state $\ket{\xi}$ before measurement has the form

\begin{equation}
\ket{\xi}=\frac{1}{2}\left(\ket{\Psi_{i}^{\mathrm{pr}}}+\ket{\Psi_{i}^{\mathrm{tr}}}\right) \otimes \ket{0}+\frac{1}{2}\left(\ket{\Psi_{i}^{\mathrm{pr}}}-\ket{\Psi_{i}^{\mathrm{tr}}}\right) \otimes \ket{1},
\label{calcevol2}
\end{equation}

where $\ket{\Psi_{i}^{\mathrm{pr}}} = U\textsubscript{VQC}(\hat{\theta})\ket{i}$
and $\ket{\Psi_{i}^{\mathrm{tr}}} = U\ket{i}$.

To define the subsequent partial measurement, the Von Neumann formalism is used, with the measurement projector  $\Omega_{0}=I\otimes\ket{0}\bra{0}$ being related to the probability $p_0$ of obtaining the outcome "0" in the following manner

\begin{align*}
p_{0}=\bra{\xi}{(\Omega_{0})^{\dagger}(\Omega_{0})}\ket{\xi}\\
=\frac{1}{2} - \frac{1}{2} \Re\left(\bra{\Psi_{i}^{\mathrm{tr}}}\ket{\Psi_{i}^{\mathrm{pr}}}\right). \tag{6}
\label{calcevol}
\end{align*}

\begin{figure}
    \begin{quantikz}
    &&\lstick{$\ket{\Psi}$} &\qw&\gate{\mathrm{U}}&\qw  &  \gate{\mathrm{U_{VQC}}} & \qw &  \qw  \\
    &&\lstick{$\ket{0}$}&\gate{H}&\ctrl{-1} &\qw & \octrl{-1} &\gate{H} & \meter{} & 
    \end{quantikz}
    \caption{Architecture for the proposed VQC tomography algorithm (PT\_VQC).}
\label{architecture}
\end{figure}
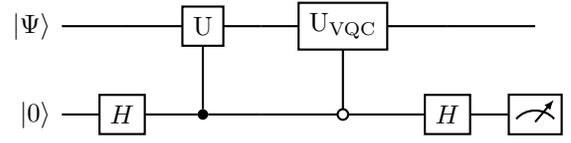

%\begin{align}
%\begin{array}{l}
%U_{Haar_{\chi}} \ket{0} \otimes \ \frac{1}{\sqrt{2}}\ket{0} + \ U_{VQC_{\phi}}\ket{0} \otimes \frac{1}{\sqrt{2}}\ket{1} \\
%\rightarrow \ket{\chi } \otimes \frac{1}{\sqrt{2}}\ket{0} +\ket{\phi } \otimes \frac{1}{\sqrt{2}}\ket{1} \\
%\rightarrow \frac{1}{2}\left(\ket{\chi } +\ket{\phi }\right) \otimes \ket{0} +\frac{1}{2}\left(\ket{\chi } -\ket{\phi }\right) \otimes \ket{1} \\
%\rightarrow \frac{1}{2} \pm \frac{1}{2} \Re\left(\bra{\phi }\ket{\chi }\right),
%\end{array}
%\label{calcevol}
%\end{align}

%We obtain the similarity via the Frobenius norm \cite{PhysRevA.105.032427},

We define the cost function as 
\begin{align*}
C(\hat{\theta}) &= \mathrm{\frac{1}{2^{n}}\sum_{i=1}^{2^{n}} \norm{\ket{\Psi_{i}^{\mathrm{pr}}}-\ket{\Psi_{i}^{\mathrm{tr}}}}^2} \tag{7}\\
&=
\frac{1}{2^n} \sum_{i=1}^{2^n} \Bigg(\braket{\Psi_{i}^{\mathrm{pr}}}{ \Psi_{i}^{\mathrm{pr}}}+
\braket{\Psi_{i}^{\mathrm{tr}}}{\Psi_{i}^{\mathrm{tr}}}\\
& + \braket{\Psi_{i}^{\mathrm{pr}}}{\Psi_{i}^{\mathrm{tr}}}   \braket{\Psi_{i}^{\mathrm{tr}}}{ \Psi_{i}^{\mathrm{pr}}}\Bigg)\tag{8}  \\ \tag{9}
&= \mathrm{\frac{1}{2^{n-1}} \sum_{i=1}^{2^{n}} \left[1-\Re \left(\braket{\Psi_{i}^{\mathrm{tr}}}{\Psi_{i}^{\mathrm{pr}}}\right)\right]}. 
\label{3}
\end{align*}

In Appendix.\,\ref{Cost_proof}, we provide the mathematical justification for the choice of this cost function. Its computation comes from the measurement over a controlled version of the VQC, thus its gradient is computed by the 4-term shift rule \cite{Wierichs_2022}

    \begin{align*} \nabla_{\theta_{i}}C(\hat{\theta}) =  
      \frac{\sqrt{2}+1}{4\sqrt{2}}[C(\hat{\theta})\rvert_{\theta_i = \theta_i + \frac{\pi}{2}}-C(\hat{\theta})\rvert_{\theta_i =\theta_i-\frac{\pi}{2}}]-\\
      \frac{\sqrt{2}-1}{4\sqrt{2}}[C(\hat{\theta})\rvert_{\theta_i= \theta_i + \frac{3\pi}{2}}-C(\hat{\theta})\rvert_{\theta_i=\theta_i - \frac{3\pi}{2}}].
      \tag{10}
    \label{parameter}
\end{align*}

 The target unitary $U$ is sampled from the Haar measure using the QR decomposition. In the minimization process, we employ the Adam optimizer, with hyperparameters \( \beta_1 = 0.8 \) and \( \beta_2 = 0.999 \), which respectively denote the decay rates of the first and second-moment estimates. In Appendix.\,\ref{appcode}, we present the pseudo-code summarizing the algorithm routines for PT\_VQC.

\begin{comment}
\begin{algorithm}[H]
\begin{algorithmic}[1]

\caption{U-VQSVD algorithm}\label{U-VQSVD algorithm}

\State \textbf{Input:} $2^{n}$ orthogonal states, $j=1$.

\For {$j\leq$ threshold}{
    
    \For {all iterations}{
    \State Generation of a $jth$ seed Haar-random controlled Unitary $U$ as a target.
    
    \For {all $\hat{\theta}$}{
        
        \For {all $2^{n}$}{ 
            
            \State Generate $4$ circuits $U_{VQC}$ for $\theta_i \pm \frac{\pi}{2}$ and $\theta_i \pm \frac{3\pi}{2}$ as seen in Fig.\ref{architecture}.
            
            \For {all $4$ circuits}
            {\State Perform transpilation and $m$ shot  measurements.
            \EndFor}}

    \EndFor
    }
        
\State Obtain C($\theta_i \pm \frac{\pi}{2}$) and C($\theta_i \pm \frac{3\pi}{2}$) averaged over $2^{n}$ states.
    
\State Perform the two-parameter shift rule described in Eq.\,\ref{parameter}.
        
\EndFor
}    
\State Obtain $\nabla_{\theta_{i}}C(\hat{\theta})$ vector for all $\theta_{i}$.

\State Use the Adam optimizer to update $\hat{\theta}$, for which $C(\hat{\theta})$ is minimized.

\EndFor

}

\If{similarity (Fig.\,\ref{Similarity}) / mean fidelity (Fig.\,\ref{QDNN_proc}) $< 0.9$}

\State Update $D_{opt}$.
\State $j=0$.

\EndIf

\State $j=j+1$

\EndFor

\end{algorithmic}
\end{algorithm}

%In Algorithm.\,\ref{PT_VQC algorithm} the PT\_VQC routines have been summarized, we consider in our work a threshold $= 5$,
%for $5$ consecutive Haar-random targets learnt up to a cost function value of 0.9.
\end{comment}
\subsection{PT\_VQC comparison to QDNN and Tensor Networks}
In this section, we outline the methodology utilized for comparing PT\_VQC, QDNN \cite{Beer2020} and tensor networks \cite{Torlai2023} using MPOs. We set the number of iterations to $60$ and establish the target average fidelity as

\begin{equation*}
    F=\frac{1}{2^{n}}\sum_{i=1}^{2^{n}}({tr\sqrt{\sqrt{\rho_{i}}\sqrt{\rho^{'}_{i}}\sqrt{\rho_{i}}}})^{2}\geq0.9, \tag{11}
\end{equation*}
where $\rho$ and $\rho^{'}$ represent the density matrices of the ideal and predicted output states, respectively. It is important to note that we solely rely on fidelity as a metric for comparing PT\_VQC to tensor networks and QDNN. The former cost function definitions remain consistent for the PT\_VQC and U-VQSVD algorithms. Afterwards, we determine the minimum input state requirements for each algorithm ensuring the proper minimization of the cost function. To ensure the robustness of our results, we assess these requirements across 5 different seeds and calculate their standard deviations.

For variational tensor networks \cite{Torlai2023}, the cost function relies on minimizing the Kullback-Leibler divergence \cite{Clim2018}. The strategy behind this algorithm is to construct the MPO representation of \(U\), assuming structures with low entanglement degrees. During each training step, we evaluate the fidelities between the predicted and actual state vectors. Employing a batch size of 10 samples, we utilize Symmetric Informationally Complete Positive Operator-Valued Measures \cite{Rastegin_2014} (SIC POVMs), producing $4^{n}$ state initializations.

For QDNN \cite{Beer2020}, the cost function is defined by the fidelity between the QDNN output  $\rho_{\mathrm{x}}^{\mathrm{out}}$ and the correct output $\phi_{\mathrm{x}}^{\mathrm{out}}$ averaged over the training data size $N$

\begin{equation*}
    C_{QDNN}=\frac{1}{N}\sum_{\mathrm{x}=1}^N \expval{\rho_{\mathrm{x}}^{\mathrm{out}}}{\phi_{\mathrm{x}}^{\mathrm{out}}} \tag{12}.
\end{equation*}
To meet the criterion of achieving a fidelity of 0.9 across 5 consecutive seeds, we utilize $3^n$ initial states for the QDNN scheme. 

\subsection{U-VQSVD algorithm}
\label{Unitary singular value decomposition}

\begin{figure}
    \begin{quantikz}
    &&&&
    \lstick{$\ket{0}$} &\gate{F^\dagger}&\ctrl{2}  &  \gate{F} & \meter{k}   
    % \lstick{$\ket{0}$} &\gate{F^\dagger}&\ctrl{2}  &  \gate{F} & \meterD{$\ket{k}$} & \cw & k  
    \\
    &&&&
    &           &             &           &             &          
    \\
    &&&&
    \lstick{$\ket{\Psi}$}&\qw &\gate{U}& \qw & \qw \ket{\Psi_{k}} &    
    \end{quantikz}
    \caption{PE-QPUF architecture:  For a target $t$ qubit initialization $\ket{\Psi}$ and an ancilla $a$ qubit initialization $\ket{0}$, a set of classical outputs \(k\) and partially measured states \(\ket{\Psi_{k}}\) are generated. 
 During user verification, the partially measured state \(\ket{\Psi_{k}}\) is initialized, and the new classical output \(k'\) is compared to the one generated during initialization \(k\) by the trusted party. \(U\) defines the target unitary, and \(F\) represents the quantum Fourier transform, which, abusing notation, maps  $\ket{j}\mapsto	\frac{1}{2^{n/2}}\sum_{k=0}^{2^{n}-1}e^{\frac{2\pi i j k}{2^{n}}}\ket{k}$.}
\label{fig:QEQPUF}
\end{figure}
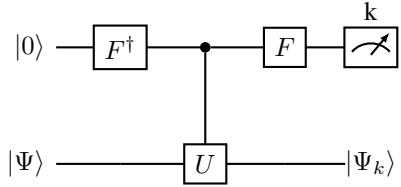

In this section, we introduce the U-VQSVD algorithm, drawing inspiration from the method for singular value decomposition of a unitary $U$ delivered by  \cite{Wang_2021}. Similar to the unitary process tomography discussed in Section \ref{Process tomography2}, the architecture of \(U_{\text{VQC}}(\hat{\theta})\) follows that of PT\_VQC, with the only change being the sequence of single-qubit gate chains, now arranged as \(RY\)-\(RX\)-\(RY\) as depicted in Fig.\,\ref{ARR}.

Using the U-VQSVD scheme, we propose a new method for process tomography for unknown general channels assuming that the corresponding eigenstates can be learned up to a phase. As a demonstration of the U-VQSVD performance we conduct a forgery attempt on a PE-QPUF \cite{QPUF_part}, with its architecture depicted in Fig. \ref{fig:QEQPUF}. The PE-QPUF is constructed with a Haar-random unitary \( U \) which, in the user verification stage, probabilistically yields a classical output \( k' \) close to the one generated during generation, \( k \) \cite{QPUF_part}. Both stages, generation and verification, are constructed using the quantum phase estimation algorithm. Consequently, the architecture comprises two subsets of qubits: ancillary and target qubits. Fig. \ref{fig:QEQPUF} illustrates the circuit responsible for generating the initial classical output $k$ with respective quantum state $\ket{\psi_{k}}$ to be acquired by the user. Conversely, during verification, the post-measurement state obtained during the generation serves as the input for the target system. Typically, the initialization state is not an eigenstate of \( U \) due to its Haar-random nature.

\subsection{U-VQSVD attack on PE-QPUF}
We outline the steps employed by U-VQSVD to numerically execute an impersonation attack on the PE-QPUF authentication protocol:

\begin{figure}
\includegraphics[width=0.45\textwidth,clip]{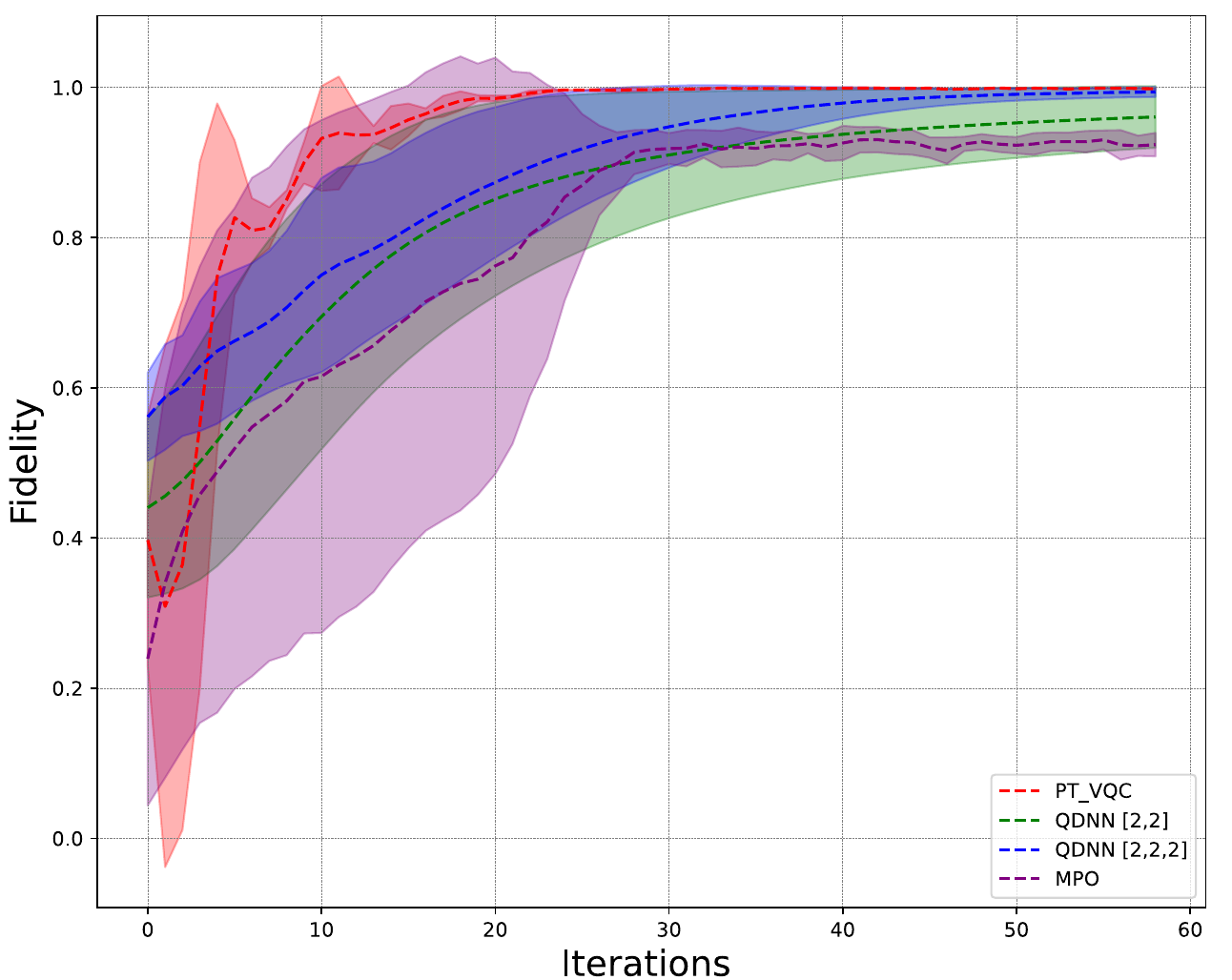}
\caption{Comparison between PT\_VQC for 2 qubits with variational MPO and QDNN with (blue) and without (green) a hidden layer in the structure. We compute the standard deviation based on 5 different Haar random unitaries.}
\label{QDNN_proc}       
\end{figure}

\begin{figure*}

\begin{minipage}[t]{0.495\textwidth}
\includegraphics[width=1.0\linewidth]
{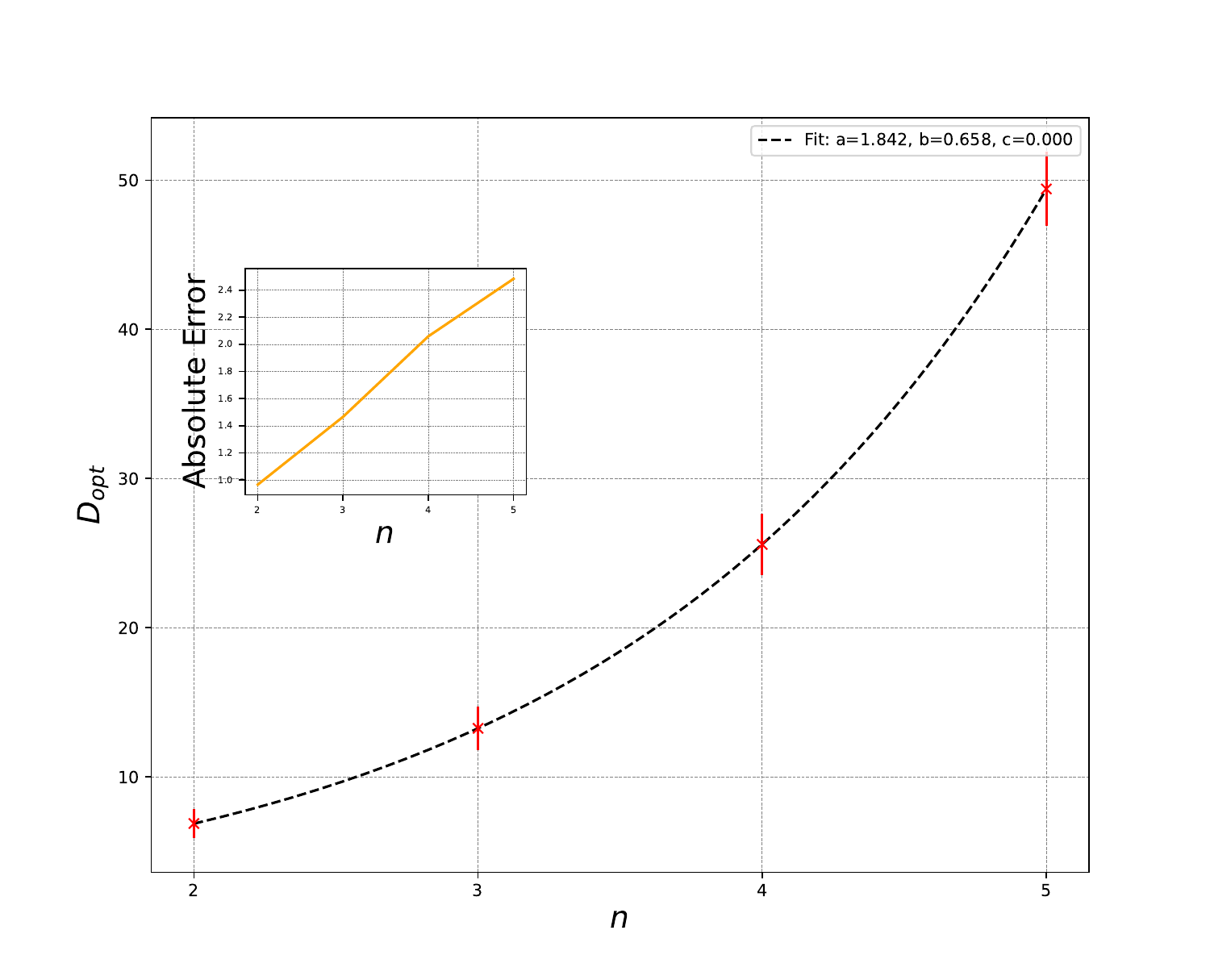}
  \end{minipage}
  \hfill
  \begin{minipage}[t]{0.495\textwidth}
    \includegraphics[width=1.0\linewidth]{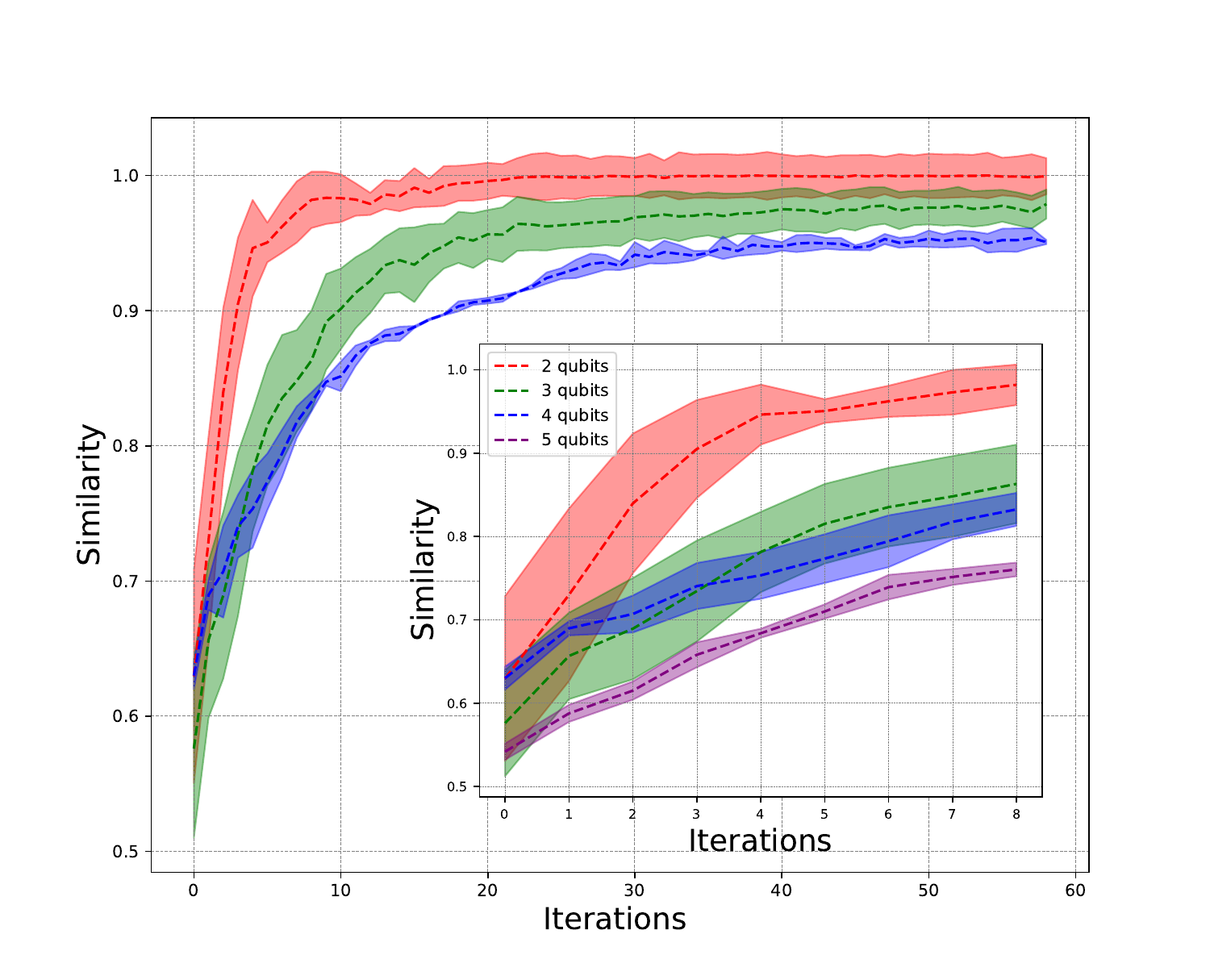}
  \end{minipage}
  
  \caption{Process tomography performance (PT\_VQC): Left) Optimal depth ($D_{opt}$) fit for the minimal resources needed to achieve maximum fidelity with the respective absolute error. Right) Similarity for 5 different Haar-random unitaries per qubit in terms of the training iteration number.}
\label{Similarity}
\end{figure*}

\begin{enumerate}
  \item We start by preparing $2^n$ states completing the computational basis $\{\ket{i}\}_{i=0}^{2^{n}-1}$ and evolve each of them via the VQC $U\textsubscript{VQC}$, using the same methodology as for the PT\_VQC algorithm in Section \ref{Process tomography2}.
  \item The cost function, instead of being defined with two independent VQCs, as in the work of \cite{Wang_2021}, is just defined with one
  
  \begin{equation*}
    C(\hat{\theta})=\frac{1}{2^{n}}\sum _{i=0}^{2^n-1} 1-  \abs{\expval{U\textsubscript{VQC}(\hat{\theta})^{\dagger}UU\textsubscript{VQC}(\hat{\theta})}{i}}. \tag{13}
    \label{cost_svd}
    \end{equation*}
    Using the modified definition in Eq.\,\ref{cost_svd}, the cost of the gradient computation complexity is reduced. This is achieved as the objective has changed, from ideally reconstructing the singular vectors of $U$, to learning them up to a global phase. 
  
    \item The parameters \(\hat{\theta}\) of the VQC are updated using the Adam optimizer. Consequently, we obtain $\{ \ket{\phi_i} \}_{i=0}^{2^n - 1}$ as the estimated singular vectors of $U$, where $\ket{\phi_i} = U_{\text{VQC}}(\hat{\theta})\ket{i}$ and $e^{i\phi_i} ={\expval{U_{\text{VQC}}(\hat{\theta})^{\dagger}UU_{\text{VQC}}(\hat{\theta})}{i}}$ is  the $i$-th estimated  singular value.

  \item Once the singular values and vectors of \(U\) have been learned, the attack occurs during the verification stage. The objective of the attacker is to impersonate the user by sending the learned singular state \(\ket{\phi_i}\), corresponding to the complex phase $\phi_i$  which minimizes the quantity \(\Big|\frac{\phi_i}{2 \pi} - \frac{k}{2^a} \Big|\).
\end{enumerate}

We assess the efficacy of this novel attack by comparing its capabilities to those of an attacker utilizing random states during the verification procedure, as well as to those of a trusted party undergoing verification.

\section{Results}
\label{Results}

%\begin{figure}
%\includegraphics[width=0.5\textwidth,clip]{paper%_tomography/Fig/State2ff.png}
%\caption{State tomography comparison for 3 qubits, between the compilation architecture and swap test for CRY-RZ-RX and CX-RX-CX-RY VQC sequences.}
%\label{State_tom}       
%\end{figure} 

%\subsection{State tomography}
%\label{State tomography}
%In Fig.\,\ref{State_tom}, we have obtained the fidelity comparison for 3 qubits between the compilation architecture and the swap test for two different $W(\hat{\theta})$ architectures. The proposed architecture of CRY-RZ-RX for the same amount of resources, optimizes the gradient rate and diminishes the uncertainty range of the fidelity for 5 different Haar-random unitaries $U$. This can be seen by comparing the \textit{CRY-swap|V}$_{d}$ with \textit{CX-swap|V}$_d$ plots, corresponding to the swap and compilation (V$_{d}$) methods. For CRY-swap and CX-swap methods we reach fidelities of $0.954\pm0.027$ and $0.935\pm0.08$ after 18 iterations. For the compilation algorithm, we achieve fidelities of $0.961\pm0.009$ and $0.956\pm0.014$ after 14 iterations for the CRY-V$_{d}$ and CX-V$_{d}$ methods, respectively. Furthermore, we can also observe from Fig.\,\ref{State_tom} that for the same amount of resources and target objective, the compilation algorithm outperforms the swap test for the same VQC architecture.

\subsection{PT-VQC algorithm}
\label{Process tomography}

In Fig. \ref{QDNN_proc}, we compare the performance of the PT\_VQC algorithm with QDNN, both with (blue) and without (green) hidden layers \cite{Beer2020}, as well as tensor networks schemes \cite{Torlai2023}. For a scenario involving 2 qubits, with a fixed number of 60 iterations and a target fidelity of 0.9, we observe a faster convergence behavior in our scheme compared to the other two methods. It is worth noting that the attenuated oscillatory behavior observed for PT\_VQC in Fig. \ref{QDNN_proc} arises from the minimization problem being formulated with respect to the cost function defined in Eq. \,\ref{3}, which is not a monotonic function of the average fidelity.

The optimal depth of the VQC for a varying amount of qubits is presented in Fig.\,\ref{Similarity} (left).
To model the relationship between the optimal depth and the number of qubits, we conducted an exponential fit of the form \(D_{\text{opt}}(n) = ae^{bn} + c\), considering the standard deviation of the optimal depth, which was calculated for 5 different Haar random unitaries $U$. In Fig. \ref{Similarity} (right), we observe the similarity, as defined in Appendix \ref{Cost_proof}, from 2 to 5 qubits over varying iteration numbers for 5 different Haar-random unitaries \(U\). The performance in Fig. \ref{Similarity} (right) exhibits a convergent behavior, where the amplitude decreases with an increase in the number of qubits. For 5 qubits, only the first 9 iterations were generated due to high processing times. Nevertheless, Fig. \ref{Similarity} (left) enables us to extrapolate the optimal depth of the VQC for larger qubit sizes.

%In Appendix.\,\ref{Complexity} the time complexity of the PT\_VQC has also been studied.

% For the variational MPO method, the standard deviation was calculated based on performance of 5 random depth-4 circuits representing the unitary. We observe the standard deviation of the fidelity to go up to $\pm 0.33$ at 9 iterations, which then abruptly stabilized at 26 iterations down to $\pm0.095$. This can be justified by the combination of the high dimensional behaviour of the algorithm through the optimization of the variational rank-4 tensor \cite{} which defines the MPO and the assumption of the low entanglement degree of the target circuit, representing a small bond dimension in the tensor network representation.

\subsection{U-VQSVD algorithm}
\label{Unitary singular value decomposition2}

Following the attack presented in Section \ref{Unitary singular value decomposition}, which relies on properly learning the singular values and vectors of the target unitary \(U\) of the PE-QPUF, we aim to forge the identity of \(10^{2}\) different users. We sample \(10^{2}\) Haar-random unitaries for system sizes ranging from 1 to 4 qubits. For each \(U\), we choose 5 different ancilla sizes \(a\) ranging from 2 to 6 qubits, resulting in a total of \(4 \times 5 = 20\) PE-QPUFs.

In each scenario, we conduct \(25 \times 2 ^a\) forgeries and compute the average discrepancy between the generation and verification outcomes. The data points depicted in Fig. \ref{SVD} represent the mean of this average across the entire set of \(10^2\) different users, with the error bars indicating the fluctuations observed in the comparison across multiple users. In Fig. \ref{SVD} (left), we observe the mean deviation of the U-VQSVD attack, contrasting it with Fig. \ref{SVD} (right), where an attack with random initializations $\ket{\Psi_{k}}$ is performed during the verification. Additionally, in Fig. \ref{SVD} (middle), we compare it to the reference mean deviation, where a trusted party performs the authentication.

\begin{figure*}
\includegraphics[width=1.0\textwidth,clip]{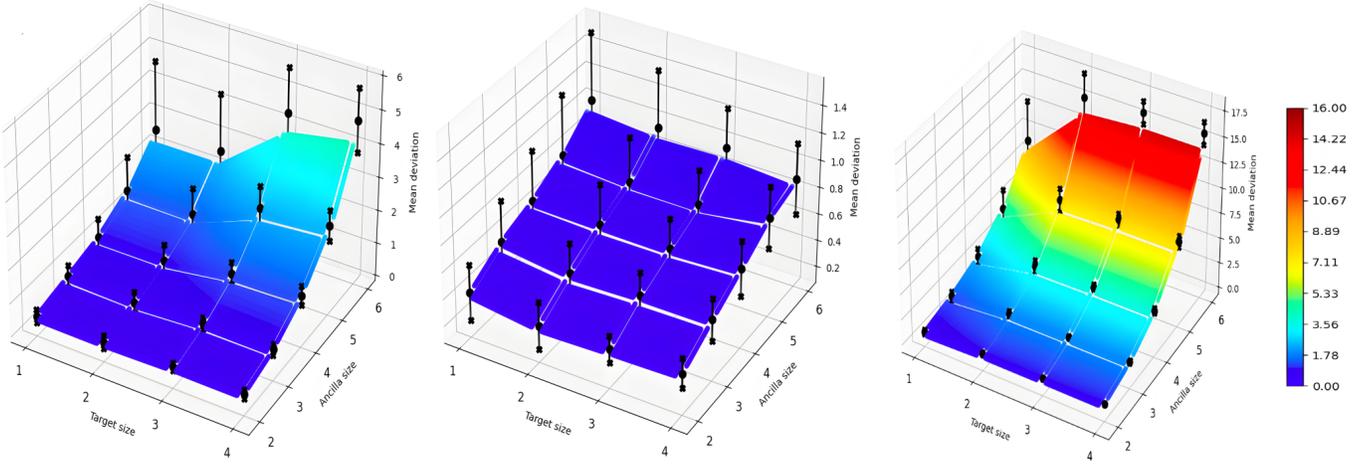}
\caption{SVD attack performance. The plots show the mean deviation between generation and verification outcomes for 20 different PE-QPUFs instances. Left) Displays the results coming out from the SVD attack. Middle) was obtained by simulating user-server interactions. Right) one can benchmark our attack with the performance obtained by means of a random forger, that is, always sending the state $\ket{0}^{\otimes t}$, with $t$ being the number of target target qubits. }
\label{SVD}       
\end{figure*} 

In Fig. \ref{SVD} we validate the effectiveness of the U-VQSVD targeting unknown quantum unitary evolutions. In other words, it demonstrates how an attack using our U-VQSVD algorithm, directed towards the previously defined PE-QPUF, outperforms an uninformed attack by a factor of 2 to 5 depending on the $t$ and $a$ sizes. It is important to note that this does not undermine the security notions established in \cite{QPUF_part}. In fact, the trusted party-server interaction demonstrates better behavior compared to our forgery attempts. For larger sizes \(t\) and \(a\), for which security is proven to hold, our attack would already fail under its own assumptions, implying that the resources required for such a scheme scale exponentially.

%Data should be added

\label{Training data generation}

%\section{Discussions}
%\label{Discussions}

\section{Conclusions}
\label{Discussion}

In this study, we introduce two innovative process tomography techniques: PT\_VQC and U-VQSVD.

PT\_VQC (Section.\,\ref{Process tomography2}) compared to \cite{PhysRevA.105.032427} significantly reduces the required number of state initializations to perform process tomography from \(4^{n}\) to \(2^{n}\) and halves the necessary number of qubits for each run. This enhancement results in improved algorithm processing time and increased unitary reconstruction accuracy. PT\_VQC surpasses both QDNN and tensor networks (MPO), as demonstrated in Fig.\,\ref{QDNN_proc}. Unlike tensor networks utilizing MPOs, which assume low entanglement structures, PT\_VQC performs well regardless of the structure of \(U\).

U-VQSVD (Section.\,\ref{Unitary singular value decomposition}) presents a novel process tomography scheme designed for unknown general channels. It assumes that the corresponding eigenstates can be learned up to a global phase and is based on the variational quantum singular value decomposition (VQSVD) architecture proposed in \cite{Wang_2021}. To evaluate the efficacy of the U-VQSVD algorithm, we conduct an impersonation attack on a PE-QPUF \cite{Mina21}. We compare the capabilities of an uninformed attacker, employing random states, with those of an attacker using U-VQSVD to generate input states, alongside a trusted party. U-VQSVD outperforms the uninformed attack by a factor of 2 to 5, depending on the ancilla and target sizes of the PE-QPUF.

It is worth noting that all the VQC employed have successfully provided solutions to the problem following a gradient-based minimization of the cost function. We conjecture that the optimization problems posed by our cost functions, defined in Eq.\,\ref{3}, Eq.\,\ref{cost_svd}, and the parameterized ansätze are not ill-posed. In other words, they do not possess undesirable local minima. The proof supporting this conjecture may pave the way for a new theoretical research direction in VQC optimization.

%A proof of this fact has eluded us.

%Using the definition of cost function presented in Eq.\ref{} for all three schemes, one can always check if a local minima was found for the cost function. As the cost functions we use own a single local minima,  

%We ran the gradient based algorithm hoping for local minima not to show up. And since the cost function can be computed on all our schemes (note how stimes u can only compute its gradient) we could check if what we'd found was a good minima. And we were lucky it was always the case. I'd somehow include in the article this idea. And I would also add the conjecture that the cost functions we use own a single local minima, even a proof for that has eluded us. (Holger brought this in my thesis presentation)
\appendix
\label{Appendix A}
\section*{Similarity correlation to the cost function}
\label{Cost_proof}

\begin{theorem} 
The complementary of the similarity  between the target and learned  operators  is  a monotonically increasing function of the cost defined in  Eq.\,\ref{3}. 
The vanishing of the cost function  defined in  Eq.\,\ref{3} implies a correct learning of the targeted operator.
\end{theorem}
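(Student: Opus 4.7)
The plan is to reduce both the cost and the similarity to a single algebraic object—the Hilbert--Schmidt inner product $\operatorname{Tr}(U^{\dagger}U_{\mathrm{VQC}})$—and then read off the two claims almost immediately.

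First, I would collapse the summation defining the cost. Using $\braket{\Psi_i^{\mathrm{tr}}}{\Psi_i^{\mathrm{pr}}} = \bra{i}U^{\dagger}U_{\mathrm{VQC}}(\hat\theta)\ket{i}$ and summing over the computational basis $\{\ket{i}\}_{i=0}^{2^n-1}$, the inner products assemble into a trace:
\[
\sum_{i=1}^{2^n}\braket{\Psi_i^{\mathrm{tr}}}{\Psi_i^{\mathrm{pr}}} \;=\; \operatorname{Tr}\!\bigl(U^{\dagger}U_{\mathrm{VQC}}(\hat\theta)\bigr).
\]
Substituting back into Eq.~(9) yields
\[
C(\hat\theta) \;=\; 2 \;-\; \frac{\Re\,\operatorname{Tr}(U^{\dagger}U_{\mathrm{VQC}})}{2^{n-1}} \;=\; \frac{1}{2^{n}}\,\bigl\|U - U_{\mathrm{VQC}}\bigr\|_F^{2},
\]
where the second equality expands $\|U-U_{\mathrm{VQC}}\|_F^{2} = 2\cdot 2^{n} - 2\,\Re\,\operatorname{Tr}(U^{\dagger}U_{\mathrm{VQC}})$ using unitarity. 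This rewriting is the technical heart of the argument: the cost is, up to the normalization $1/2^{n}$, precisely the squared Hilbert--Schmidt distance between the target and the learned operator.

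Second, I would invoke the natural trace-based similarity $S(U,U_{\mathrm{VQC}}) = \Re\,\operatorname{Tr}(U^{\dagger}U_{\mathrm{VQC}})/2^{n}$. Under this convention one obtains the affine relation $1 - S = C/2$, which makes the complementary similarity a strictly monotonically increasing (in fact linear) function of the cost. If instead the paper adopts the modulus-based variant $|\operatorname{Tr}(U^{\dagger}U_{\mathrm{VQC}})|/2^{n}$, which quotients out global phases, the inequality $|z| \geq \Re(z)$ still gives $1 - S \leq C/2$, which is monotone and sufficient to transfer the vanishing of $C$ to the vanishing of $1-S$.

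Third, for the second assertion I would argue term-by-term. Each summand $1 - \Re\braket{\Psi_i^{\mathrm{tr}}}{\Psi_i^{\mathrm{pr}}}$ is non-negative because $|\braket{\Psi_i^{\mathrm{tr}}}{\Psi_i^{\mathrm{pr}}}| \leq 1$ for unit vectors, so $C(\hat\theta)=0$ forces every summand to vanish. Combined with the modulus bound, $\Re\braket{\Psi_i^{\mathrm{tr}}}{\Psi_i^{\mathrm{pr}}}=1$ uniquely identifies $\braket{\Psi_i^{\mathrm{tr}}}{\Psi_i^{\mathrm{pr}}}=1$, hence $\ket{\Psi_i^{\mathrm{pr}}}=\ket{\Psi_i^{\mathrm{tr}}}$ and $U_{\mathrm{VQC}}\ket{i}=U\ket{i}$ on a complete basis. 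Linear extension then gives $U_{\mathrm{VQC}}=U$. Alternatively, the Frobenius rewriting above makes this immediate from positive-definiteness of $\|\cdot\|_F$.

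I expect the only delicate point to be matching the paper's exact notion of similarity; the underlying algebra is elementary, and both halves of the theorem follow from the single identification of Eq.~(9) with the normalized squared Hilbert--Schmidt distance. Once that identification is in place, monotonicity is either an equality or a one-sided inequality (depending on the similarity convention), and the vanishing-cost implication is just a restatement of the definiteness of the Frobenius norm together with the fact that the computational basis spans the whole Hilbert space on which $U$ and $U_{\mathrm{VQC}}$ act.
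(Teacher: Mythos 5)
Your proposal is correct and rests on the same algebraic core as the paper's proof: identifying the cost of Eq.~(9) with the normalized squared Frobenius distance $\|U-U_{\mathrm{VQC}}\|_F^{2}/2^{n}$. The only divergence is the similarity convention, which you correctly flagged as the delicate point: the paper defines the complementary similarity as $1-S=\|U-U_{\mathrm{VQC}}\|_F/\|U\|_F$ (unsquared Frobenius norm), so the resulting relation is $1-S\propto^{+}\sqrt{C(\hat\theta)}$ rather than your affine $1-S=C/2$; since $x\mapsto\sqrt{x}$ is monotonically increasing and vanishes at $0$, both the monotonicity claim and the vanishing-cost implication go through exactly as in your argument. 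Your term-by-term treatment of the second assertion (each summand $1-\Re\braket{\Psi_i^{\mathrm{tr}}}{\Psi_i^{\mathrm{pr}}}\geq 0$, so $C=0$ forces $U_{\mathrm{VQC}}\ket{i}=U\ket{i}$ on a complete basis) is in fact slightly more explicit than the paper's, which leaves this to the positive-definiteness of the Frobenius norm.
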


\begin{proof}

The complementary of the similarity, $S(A,B)$, between two operators $A$ and $B$ is defined  via the Frobenius norm \cite{BOTTCHER20081864} as

\begin{equation*}
 1-S(A,B)=\frac{\norm{A-B}_{F}}{\norm{A}_{\mathrm{F}}}, \tag{14}
\end{equation*}

where, if $x_{ij}$ are the entries of an operator $X$ in an orthonormal basis $\{ \ket{i} \}_{i=0}^{2^n-1}$, then

\begin{equation*}
 \mathrm{\norm{X}\mathrm{_F}= \sqrt{\sum_{i}^{2^{n}-1}\sum_{j}^{2^{n}-1} \abs{x_{ij}}^2}}.\tag{15}
\end{equation*}

For the unitary  operators $U$ and $U(\hat{\theta})_{\mathrm{VQC}}$, we have

\begin{equation*}
  1-S\left(U,U(\hat{\theta})\textsubscript{VQC}\right) \propto^+ \norm{ (U-U(\hat{\theta})\textsubscript{VQC})}_{\mathrm{F}}, \tag{16}
\end{equation*}

where $\propto^+$ stands for "proportional via a positive constant".

If we define $\bra{k}\left(U-U(\hat{\theta})\textsubscript{VQC}\right)\ket{i} \equiv \alpha_{k}^i$, then

\begin{equation*}
 1- S\left(U,U(\hat{\theta})\textsubscript{VQC}\right) \propto^+ \sqrt{\sum_{i=0}^{2^n-1} \sum_{k=0}^{2^n-1} |\alpha^i_k|^2} \tag{17}        ,
\end{equation*}

\begin{equation*}
     = \sqrt{\sum_{i=0}^{2^n-1} \bigg| \bigg|\Big(U-U(\hat{\theta})\textsubscript{VQC}\big) \Big) \ket{i} \bigg| \bigg|^2},     \tag{18}
\end{equation*}

leading to

\begin{equation*}
   1- S\left(U,U(\hat{\theta})\textsubscript{VQC}\right) \propto^+ \sqrt{2 \sum_{i=0}^{2^n-1}   1 - \Re\{ \bra{i}U^\dagger U(\hat{\theta})_{\mathrm{VQC}}\ket{i}\}}, \tag{19}
\end{equation*}

to finally obtain

\begin{equation*}
    1- S\left(U,U(\hat{\theta})\textsubscript{VQC}\right) \propto^+ \sqrt{f(\hat{\theta})}. \tag{20}
\end{equation*}

The fact that $g(x)=\sqrt{x}$ is a monotonically increasing function added to the fact that $\sqrt{0} = 0$, concludes the proof.
\end{proof}

\section*{Computation of the real and imaginary part of an overlap between quantum states}
\label{Appendix_singular}
Considering a Hilbert space of dimension $2^n$ with $\ket{\psi}$, $\ket{\phi}$ $\in$ $\mathcal{H}^{\otimes n}$. The inner product $\bra{\phi}\ket{\psi}$ gathers the real and imaginary parts that we want to obtain, with the assumption of owning several copies of both $\ket{\psi}$ and $\ket{\phi}$. Let $V$ and $W$ be unitary operators such that $V\ket{0}^{\otimes n} = \ket{\psi}$ and  $W\ket{0}^{\otimes n} = \ket{\phi}$.

\paragraph*{Real part} 
We can calculate the real part by carrying out the following circuit on ($n+1$)-qubits register

\begin{equation*}
    \Big(H\otimes 1\!\!1\Big)\Big(C_1W    \Big)\Big(C_0V  \Big) \Big(H \otimes 1\!\!1\Big)\ket{0}\textsubscript{ancilla} \otimes\ket{0}^{\otimes n}, \tag{21}
    \label{real1}
\end{equation*}

hence, obtaining

\begin{equation*}
\Re\{ \bra{\phi}\ket{\psi}\} = 2 p_0-1, \tag{22}
\label{real2}
\end{equation*}

where the subscript in $C_x$ stands for the choice of the state to be controlled and $p_0$ is the probability of obtaining the classical outcome "$0$" when measuring the ancillary qubit.

\paragraph*{Imaginary part} 
We can also calculate the imaginary part by carrying out the following circuit on ($n+1$)-qubits register

\begin{equation*}
    \Big(H \otimes 1\!\!1\Big)\Big(C_1W    \Big)\Big(C_0V  \Big) \Big(P\Bigg(\frac{3}{2}\pi\Bigg)H\otimes 1\!\!1\Big)\ket{0}_{ancilla} \otimes\ket{0}^{\otimes n},\tag{23}
    \label{imag1}
\end{equation*}

hence, obtaining

\begin{equation*}
\Im \{ \bra{\phi}\ket{\psi}\} = 2p_0 -1. \tag{24}
\label{imag2}
\end{equation*}

With the real and imaginary parts defined, the attacker can calculate $\ket{\phi}= W(\hat{\theta)}\ket{i}$ and $\ket{\psi} = UW(\hat{\theta)}\ket{i}$, in order to obtain the singular value $\lambda_{i}$ corresponding to the $i$-th singular vector.

\section*{Pseudo-code for PT\_VQC}
\label{appcode}
\
The routines for the PT\_VQC algorithm are described below. Our objective is to achieve a cost function $C(\hat{\theta}) \geq 0.10$ for 5 consecutive Haar-random unitaries (with a threshold of 5). The same pseudo-code applies for U-VQSVD, but instead of the 4-term shift rule, we use the 2-term shift rule \cite{Wierichs_2022}.
\begin{algorithm}[H]
\begin{algorithmic}[1]

\caption{PT\_VQC  algorithm}\label{PT_VQC algorithm}
\State \textbf{Input:} $2^{n}$ orthogonal states, $j=1$.

\For {$j\leq$ threshold}{
    
    \For {all iterations}{
    \State Generation of a $jth$ seed Haar-random controlled Unitary $U$ as a target.
    
    \For {all $\hat{\theta}$}{
        
        \For {all $2^{n}$}{ 
            
            \State Generate $4$ circuits $U\textsubscript{VQC}$ for $\theta_i \pm \frac{\pi}{2}$ and $\theta_i \pm \frac{3\pi}{2}$ as seen in Fig.\ref{architecture}.
            
            \For {all $4$ circuits}
            {\State Perform transpilation and $m$ shot  measurements.
            \EndFor}}

    \EndFor
    }
        
\State Obtain $C(\hat{\theta})\rvert_{\theta_i = \theta_i \pm \frac{\pi}{2}}$ and $C(\hat{\theta})\rvert_{\theta_i = \theta_i \pm \frac{3\pi}{2}}$ averaged over $2^{n}$ states.
    
\State Perform the 4-term shift rule described in Eq.\,\ref{parameter}.
        
\EndFor
}    
\State Obtain $\nabla_{\theta_{i}}C(\hat{\theta})$ vector for all $\theta_{i}$.

\State Use the Adam optimizer to update $\hat{\theta}$, for which $C(\hat{\theta})$ is minimized.

\EndFor

}

\If{$C(\hat{\theta}) \geq 0.10$}

\State Update $D\textsubscript{opt}$.
\State $j=0$.

\EndIf

\State $j=j+1$

\EndFor

\end{algorithmic}
\end{algorithm}

\section*{Code Availability}
The github repository can be found at: \href{https://github.com/terrordayvg/PT_VQC-Tomography}{https://t.ly/REETC}.

\section*{Acknowledgement}
Christian Deppe and Roberto Ferrara were supported by the German Federal Ministry of Education and Research (BMBF), Grant 16KIS1005.
Christian Deppe acknowledge the financial support by the Federal Ministry of Education and Research of Germany in the programme of ''Souver\"an. Digital. Vernetzt''. Joint project 6G-life, project identification number: 16KISK002.
Vladlen Galetsky, Soham Ghosh and Christian Deppe were supported by the BMBF Project 16KISQ038.
Christian Deppe and Roberto Ferrara are supported by the Munich Quantum Valley, which is supported by the Bavarian state government with funds from the Hightech Agenda Bayern Plus.

\bibliographystyle{IEEEtran}
\bibliography{main}
\end{document}